\newtheorem{definition}{Definition}
\newtheorem{theorem}{Theorem}
\def\BibTeX{{\rm B\kern-.05em{\sc i\kern-.025em b}\kern-.08em
    T\kern-.1667em\lower.7ex\hbox{E}\kern-.125emX}}
\begin{document}



\title{Learning-based Incentive Mechanism for Task Freshness-aware Vehicular Twin Migration}

\author{
\IEEEauthorblockN{
Junhong Zhang\IEEEauthorrefmark{1}, Jiangtian Nie\IEEEauthorrefmark{2}, Jinbo Wen\IEEEauthorrefmark{1}, Jiawen Kang\IEEEauthorrefmark{1}, Minrui Xu\IEEEauthorrefmark{2}, Xiaofeng Luo\IEEEauthorrefmark{1}, Dusit Niyato\IEEEauthorrefmark{2}, \textit{Fellow, IEEE}
\IEEEcompsocitemizethanks{
The work was supported by NSFC under grant No. 62102099, U22A2054, and the Pearl River Talent Recruitment Program under Grant  2021QN02S643, and also supported in part by National Key R\&D Program of China (No. 2020YFB1807802), and the National Research Foundation (NRF), Singapore and Infocomm Media Development Authority under the Future Communications Research Development Programme (FCP). (Corresponding author: Jiawen Kang (e-mail: kavinkang@gdut.edu.cn)).}
}

\IEEEauthorblockA{
\IEEEauthorrefmark{1}\textit{Guangdong University of Technology, China} \IEEEauthorrefmark{2}\textit{Nanyang Technological University, Singapore}}

}

\maketitle

\begin{abstract}
Vehicular metaverses are an emerging paradigm that integrates extended reality technologies and real-time sensing data to bridge the physical space and digital spaces for intelligent transportation, providing immersive experiences for Vehicular Metaverse Users (VMUs). VMUs access the vehicular metaverse by continuously updating Vehicular Twins (VTs) deployed on nearby RoadSide Units (RSUs). Due to the limited RSU coverage, VTs need to be continuously online migrated between RSUs to ensure seamless immersion and interactions for VMUs with the nature of mobility. However, the VT migration process requires sufficient bandwidth resources from RSUs to enable online and fast migration, leading to a resource trading problem between RSUs and VMUs. To this end, we propose a learning-based incentive mechanism for migration task freshness-aware VT migration in vehicular metaverses. To quantify the freshness of the VT migration task, we first propose a new metric named Age of Twin Migration (AoTM), which measures the time elapsed of completing the VT migration task. Then, we propose an AoTM-based Stackelberg model, where RSUs act as the leader and VMUs act as followers. Due to incomplete information between RSUs and VMUs caused by privacy and security concerns, we utilize deep reinforcement learning to learn the equilibrium of the Stackelberg game. Numerical results demonstrate the effectiveness of our proposed learning-based incentive mechanism for vehicular metaverses.
\end{abstract}


\begin{IEEEkeywords}
Metaverse, vehicular twin, Stackelberg game, Age of Information, deep reinforcement learning.
\end{IEEEkeywords}

\section{Introduction}

The rapid advancement of immersive communication, such as Virtual Reality (VR), Augmented Reality (AR), and ubiquitous Artificial Intelligence (AI) has given rise to the vehicular metaverse. Vehicular metaverses are expected to lead the revolution of intelligent transportation systems by seamlessly blending virtual and physical spaces, allowing for providing immersive services for Vehicular Metaverse Users (VMUs) (i.e., drivers and passengers within vehicles)\cite{zhou2022vetaverse}. Vehicular Twins (VTs) are highly accurate virtual hybrid replicas that cover the entire life cycle of vehicles and VMUs \cite{yu2022bi}. The VTs are updated by sensing data from the surrounding environment to achieve physical-virtual synchronization\cite{xu2022epvisa}. Through VTs, VMUs can access the vehicular metaverse to enjoy a wide range of metaverse applications, such as AR navigation, virtual education, and virtual games\cite{khan2023metaverse, yu2022bi}.

To ensure seamless immersive experiences for VMUs in the vehicular metaverse, resource-limited vehicles offload latency-sensitive and computation-intensive tasks of updating VTs to nearby edge servers in RoadSide Units (RSUs)\cite{yu2022bi}. However, due to the limited coverage of RSUs and the mobility of vehicles, each VT has to be migrated from the current RSU to another to provide uninterrupted immersive services for VMUs. Therefore, the task freshness of the VT migration, i.e., the time it takes to complete the VT migration, is critical to VMUs. To ensure VT migration efficiency, VMUs need to purchase sufficient resources from RSUs for facilitating VT migration, especially bandwidth resources. Without loss of generality, the Metaverse Service Provider (MSP) is set as the manager of RSUs, which is the sole provider of bandwidth resources during VT migration. The MSP aims to optimize its bandwidth selling price and maximize revenue from resource trading with incomplete information. Existing work has been conducted to optimize resource pricing and allocation based on the incentive mechanism in the metaverse\cite{huang2022joint,jiang2022reliable,nguyen2022metachain}. The authors in \cite{huang2022joint} formulated a Stackelberg game joint user association and resource pricing. The authors in \cite{jiang2022reliable} proposed a hierarchical game-theoretic approach to study a reliable coded distributed computing scheme in vehicular metaverses.
However, they ignore the VT migration issue caused by the mobility of vehicles. Therefore, it is still challenging in tackling the resource trading problem in VT migration.

To address the above challenges, in this paper, we propose a new metric named Age of Twin Migration (AoTM) according to the concept of Age of Information (AoI). Considering that VMUs may be reluctant to disclose their private information for privacy security during VT migration, we propose a learning-based incentive mechanism between the MSP and VMUs. The main contributions are summarized as follows:

\begin{itemize}
    \item To quantify the freshness of the VT migration task, we propose a new metric named AoTM according to the concept of AoI for vehicular metaverses and apply it to evaluate the immersion of VMUs.
    \item To improve VT migration efficiency under information incompleteness, we formulate the Stackelberg game between the MSP and VMUs, in which the MSP acts as the leader and VMUs act as followers.
    \item We utilize Deep Reinforcement Learning (DRL) to solve the Stackelberg game under incomplete information. Numerical results demonstrate that the proposed learning-based scheme can converge to the Stackelberg equilibrium and outperform baseline schemes.

\end{itemize}

\section{System Model}

    As shown in Fig. \ref{framework}, edge-assisted remote rendering as a key technology is applied in vehicular metaverses \cite{huang2022joint}. To construct VTs for lower-latency and ultra-reliable metaverse services, such as AR navigation, e-commerce, and virtual games, the large-scale rendering tasks are offloaded to nearby edge servers in RSUs with abundant resources (i.e., storage, bandwidth, and computing) \cite{yu2022bi}. However, due to the dynamic mobility of vehicles and the limited service coverage of RSUs \cite{zhou2022vetaverse}, VTs must be migrated from the source RSUs to the destination RSUs for realizing fully immersive metaverse services. We provide more details of the system model as follows:


\begin{figure}[t]
\vspace{-0.5cm}
\centerline{\includegraphics[width=0.5\textwidth]{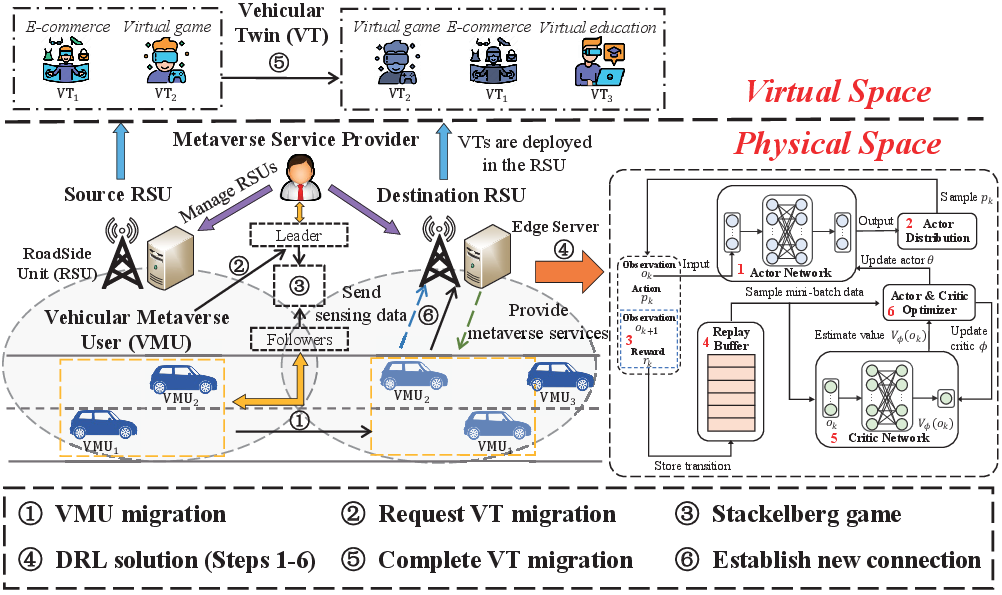}}
\caption{A learning-based incentive mechanism framework for VT migration.}
\label{framework}
\end{figure}

\begin{itemize}
    \item \textbf{MSP:} The MSP as the manager of RSUs can schedule resources of RSUs to provide necessary resources (e.g., computing and bandwidth) for VMUs \cite{huang2022joint}. After being authorized, the MSP can manage a number of communication channels between the source RSUs and the destination RSUs\cite{huang2022joint}. Besides, the MSP leverages sensing data (e.g., traffic conditions and vehicle locations) sent by VMUs to update VTs for providing ultra-reliable and real-time metaverse services for VMUs.
    \item \textbf{VTs:} 
    VTs are the digital replicas deployed in RSUs. They cover the life cycle of vehicles and VMUs and act as intelligent assistants managing metaverse applications\cite{yu2022bi}. In addition, VTs can also analyze and predict their VMUs' behavior through a pre-trained machine learning model. Note that we consider that each VMU has a corresponding VT and the VT can be transmitted in the form of blocks during migration.
    \item \textbf{VMUs:} 
    Without loss of generality, VMUs refer to drivers and passengers within vehicles. The widespread use of VR, AR, and spatial audio devices enables VMUs to enjoy metaverse services through Head-Mounted Displays (HMDs) as well as AR windshields and side windows \cite{zhou2022vetaverse}. Additionally, smart sensors on VMUs (e.g., cameras, Inertial Measurement Units (IMU) suits) collect and send sensing data (e.g., driver fatigue level and vehicle locations) to the MSP for VT synchronization \cite{yu2022bi}.
\end{itemize}

\section{Problem Formulation}
In this section, to quantify the freshness of the VT migration task, we first propose a new metric named AoTM, which can evaluate the immersion of VMUs. Then, we design a Stackelberg game model between the MSP and VMUs for VT migration and analyze the game to prove the existence and the uniqueness of Stackelberg equilibrium among the MSP and VMUs\cite{huang2022joint,zhan2020learning}. In this paper, we consider that one MSP and a set $\mathcal{N}=\{1, \ldots, n, \ldots, N\}$ of $N$ VMUs participate in VT migration and all VTs of VMUs need to be migrated.

\subsection{Age of Twin Migration}
AoI has been widely utilized to quantify data freshness at the destination\cite{yates2021age}. It is defined as the time elapsed since the latest received update was generated at its source, which is a promising metric to improve the performance of time-critical services \cite{kang2022blockchain}. Similarly, in vehicular metaverses, to quantify the freshness of the
VT migration task, we propose a new metric named AoTM according to the concept of the AoI, which is defined as the time elapsed between the last successfully received VT block and the generation of the first VT block in the VT migration. 

We consider that the Orthogonal Frequency Division Multiplexing
Access (OFDMA) technology is applied in the system\cite{huang2022joint}, which ensures that all communication channels occupied by the source RSU and the destination RSU are orthogonal. For VMU $n\in \mathcal{N}$, given the purchased bandwidth $b_n \in (0, +\infty)$ from the MSP, the achievable task transmission rate between the source RSU and the destination RSU is $
\gamma_n = b_n \log_2\Big(1+\frac{\rho h^0 d^{-\varepsilon}}{N_0}\Big),$ where $\rho$, $h^{0}$, $d$, $\varepsilon$, and $N_0$ represent the transmitter power of the source RSU, the unit channel power gain, the distance between the source RSU and the destination RSU, the path-loss coefficient, and the average noise power, respectively\cite{huang2022joint}. Therefore, for VMU $n$, the AoTM of the VT migration task is  
\begin{equation}
    A_n=\frac{D_n}{\gamma_n},
\end{equation}
following the pre-copy live migration strategy in \cite{imran2022live}, the total migrated VT data $D_{n}$ includes the information of system configuration (e.g., CPU and GPU), historical memory data, and real-time states of VMU $n$.

\subsection{Stackelberg Game}
In VT migration, the MSP is the sole bandwidth resource holder and VMUs rely on bandwidth resources provided by the MSP to migrate VTs between RSUs. As a result, a monopoly market is formed, in which the MSP, as the monopolist, has the pricing power of bandwidth and VMUs need to respond to the price by deciding how much bandwidth to purchase. To be specific, when the selling price of bandwidth is low, VMUs may be willing to purchase more bandwidth for enhancing immersive experiences. Conversely, VMUs are reluctant to purchase when the selling price is high, resulting in poor task freshness. Therefore, the selling price of bandwidth has a significant impact on the immersion of VMUs.

To maximize the MSP's profit and maintain its monopoly power, the Stackelberg game can provide a powerful game theoretical model that has been widely used by the monopolist to strategically set the price. The Stackelberg game between the MSP and VMUs consists of two stages. In the first stage, the MSP as the leader decides the selling price of bandwidth for its maximum utility. In the second stage, each VMU as a follower determines the bandwidth demand to maximize its utility. Note that the second stage of the game can be formulated as a competitive game\cite{jiang2022reliable}.

\subsubsection{Utility formation in the VT migration}
 The utility of VMU $n$ is the difference between the profit corresponding to its immersion and its cost of purchasing bandwidth. The higher AoTM impacts the immersive experiences of VMUs negatively, resulting in decreasing the immersion of VMUs\cite{jiang2022reliable}. Following \cite{fan2021cloud}, the immersion function of VMU $n$ obtained from the MSP is defined as
$G_{n} = \alpha_n \ln \left(1+1/A_n\right)$,
where $\alpha_n>0$ is the unit profit for the immersion of VMU $n$.  Therefore, the utility function of VMU $n$ is
\begin{equation}
\label{mu}
U_n(b_n) = G_{n} - p\cdot b_{n},
\end{equation}
where $p>0$ is the unit selling price of bandwidth. In the follower stage, each VMU $n$ maximizes its revenue $U_n(b_n)$ by deciding the best bandwidth demand to purchase. Thus, the problem of maximizing the utility of VMU $n$ is formulated as
\begin{equation}
    \begin{split}
    \textbf{Problem 1:}\: &\max\limits_{b_{n}} \:U_n(b_n)  \\
    &\:\:\text{s.t.}\:\: b_{n} > 0.
    \end{split}
\end{equation}

For the MSP, its utility is the difference between the sum of bandwidth fees paid by all VMUs and the transmission cost for VT migration tasks, which is affected by the unit selling price of bandwidth and bandwidth demands of VMUs. Thus, the utility of the MSP is 
\begin{equation}
U_{s}(p) = \sum_{n=1}^{N}(p\cdot b_n-C\cdot b_n),
\label{msp}
\end{equation}
where $C>0$ is the unit transmission cost of bandwidth for executing the VT migration task, which is proportional to the amount of bandwidth sold to the VMUs. In the first stage, considering that the bandwidth sold by the MSP has a maximum bandwidth $B^{max}$ and the maximum bandwidth pricing $p^{max}$, the MSP maximizes its revenue by deciding a selling price that ensures the total bandwidth sales do not exceed $B^{max}$ and the bandwidth price does not exceed $p^{max}$. Thus, the problem of maximizing the utility of the MSP is formulated as
\begin{equation}
    \begin{split}
    \textbf{Problem 2:}\:&\max\limits_{p}\:U_{s}(p)  \\
    &\:\:\text{s.t.}\:\: {0 < \textstyle \sum_{n=1}^{N}}b_n \leq B^{max},\\
    &\quad\:\:\:\:\:\: b_n > 0,\:\forall n \in \small\{1,\ldots,N\small\},\\
    &\quad\:\:\:\:\:\: 0 < C \leq p \leq p^{max} .
    \end{split}
\end{equation}

\subsubsection{Stackelberg equilibrium analysis}
The Stackelberg game is formulated by combining \textbf{Problem 2} and \textbf{Problem 1}. We seek the Stackelberg equilibrium to obtain the optimal solution to the formulated game. In the Stackelberg equilibrium, the MSP's utility is maximized considering that the VMUs make bandwidth demand strategies based on the best response, and neither the MSP nor any VMU can improve the individual utility by deviating from their strategies \cite{jiang2022reliable,huang2022joint}. The Stackelberg equilibrium is defined as follows:

\begin{definition}
(Stackelberg Equilibrium): We denote $\boldsymbol{b}^{*}=\{b_{n}^{*}\}_{n=1}^N$ and $p^{*}$ as the optimal bandwidth demand strategy vector and the optimal unit bandwidth selling price, respectively. Then, the strategies $(\boldsymbol{b}^{*}=\{b_{n}^{*}\}_{n=1}^N, p^{*})$ can be Stackelberg equilibrium if and only if the following set of inequalities is strictly satisfied:
\begin{equation}
    \left\{\begin{array}{l}U_{s}\left(\boldsymbol{b}^{*},p^{*} \right) \geq U_{s}\left(\boldsymbol{b}^{*}, p  \right),\vspace{0.05in}\\ 
     U_{n}\left(b_{n}^{*} , \boldsymbol{b_{-n}}^{*} , p^{*}\right)\geq U_{n}\left(b_{n} , \boldsymbol{b_{-n}}^{*} , p^{*}\right),\:\forall n \in \mathcal{N}.\end{array}\right.
\end{equation}
\end{definition}

In the following, we adopt the backward induction method to prove the Stackelberg equilibrium\cite{huang2022joint}.

\begin{theorem}
The sub-game perfect equilibrium in the VMUs' subgame is unique.
\end{theorem}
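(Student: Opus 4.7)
The key observation is that in the follower stage each VMU $n$'s utility $U_n(b_n)=G_n-p\,b_n$ depends only on its own bandwidth demand $b_n$ and the leader's price $p$; no term couples VMU $n$ to any other VMU's strategy. Consequently the VMUs' sub-game decomposes into $N$ independent single-variable maximization problems, and the joint equilibrium is unique as soon as each individual Problem~1 has a unique maximizer on $(0,\infty)$. So the plan is simply to establish strict concavity of $U_n(\cdot)$ in $b_n$ for fixed $p$ and then solve the resulting first-order condition in closed form.

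First I would substitute $A_n=D_n/\gamma_n$ with $\gamma_n=b_n\log_2(1+\rho h^0 d^{-\varepsilon}/N_0)$ into $G_n=\alpha_n\ln(1+1/A_n)$ to rewrite
\begin{equation*}
U_n(b_n)=\alpha_n\ln\!\left(1+\frac{k_n\,b_n}{D_n}\right)-p\,b_n,
\end{equation*}
where $k_n:=\log_2(1+\rho h^0 d^{-\varepsilon}/N_0)>0$ collects the channel constants. A direct computation of $\partial^{2}U_n/\partial b_n^{2}$ yields an expression of the form $-\alpha_n k_n^{2}/(D_n+k_n b_n)^{2}<0$, establishing strict concavity on $(0,\infty)$. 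The first-order condition $\partial U_n/\partial b_n=0$ therefore has at most one solution, and solving it gives the closed-form best response $b_n^{*}(p)=\alpha_n/p-D_n/k_n$. Assembling the $N$ independent best responses into a single profile $\boldsymbol{b}^{*}(p)$ yields a unique sub-game equilibrium for any fixed leader price $p$, which is exactly the statement of the theorem.

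The main obstacle, such as it is, lies not in the concavity calculation (essentially one line) but in handling the boundary constraint $b_n>0$: the unconstrained stationary point $b_n^{*}(p)=\alpha_n/p-D_n/k_n$ can become non-positive when $p\ge\alpha_n k_n/D_n$. I would dispose of this by appealing to the pricing constraint $p\le p^{\max}$ from Problem~2 together with the implicit participation assumption that $\alpha_n$ is large enough for VMU $n$ to wish to migrate at all; under these conditions the interior first-order solution is strictly positive and hence admissible, and otherwise the VMU drops out with a trivial unique best response of approaching zero. Because the VMUs' strategies are uncoupled, no fixed-point or contraction argument is required, and uniqueness of the sub-game perfect equilibrium follows immediately from uniqueness of each scalar maximizer.
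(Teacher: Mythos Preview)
Your proposal is correct and follows essentially the same route as the paper: compute the first and second derivatives of $U_n(b_n)$, use the negative second derivative to establish strict concavity, and solve the first-order condition to obtain the closed-form best response $b_n^{*}=\alpha_n/p-D_n/\log_2(1+\rho h^0 d^{-\varepsilon}/N_0)$. Your explicit remark that the VMUs' problems are decoupled and your discussion of the positivity of $b_n^{*}$ are additions the paper omits, but they do not change the underlying argument.
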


\begin{proof}
We derive the first-order derivative and the second-order derivative of $U_n(b_n)$ with respect to $b_{n}$ as follows:
\begin{equation}
\begin{split}
 \frac{\partial U_{n}(b_n)}{\partial b_n}&=\frac{\alpha_n\log _{2}\left(1+\frac{\rho^{} h^{0} d^{-\varepsilon}}{N_{0}}\right) }{D_n+b_n\log _{2}\left(1+\frac{\rho^{} h^{0} d^{-\varepsilon}}{N_{0}}\right) } -p,\\
\frac{\partial^{2} U_{n}(b_n)}{\partial b_{n}^{2}}&=-\frac{\alpha_n\bigg(\log _{2}\left(1+\frac{\rho^{} h^{0} d^{-\varepsilon}}{N_{0}}\right)\bigg)^2 }{\bigg(D_n+b_n\log _{2}\left(1+\frac{\rho^{} h^{0} d^{-\varepsilon}}{N_{0}}\right)\bigg)^2} <0.
\end{split}
\end{equation}
As the first-order derivative of $U_n(b_n)$ has a unique zero point, and the second-order derivative of $U_n(b_n)$ is negative, the VMU's utility function $U_n(b_n)$ is strictly concave with respect to $b_n$. Then, based on the first-order optimality condition, i.e., $\frac{\partial U_{n}(b_n)}{\partial b_n}=0$, we can obtain the best response function of VMU $n$, given by
\begin{equation}
    b_n^* = \frac{\alpha_n}{p }-\frac{D_n}{\log _{2}\Big(1+\frac{\rho^{} h^{0} d^{-\varepsilon}}{N_{0}}\Big)}.
\label{best respone}
\end{equation}
Therefore, the sub-game perfect equilibrium in the VMUs' subgame is unique.
\end{proof}

\begin{theorem}
There exists a unique Stackelberg equilibrium $(\boldsymbol{b}^*,p^*)$ in the formulated game.
\end{theorem}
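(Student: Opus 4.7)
The plan is to apply backward induction: substitute the unique follower best responses $b_n^*(p)$ obtained from Theorem 1 into the leader's utility $U_s(p)$, and then show that the resulting single-variable problem admits a unique maximizer on the feasible interval. Since each follower map $p \mapsto b_n^*(p)$ is already single-valued by Theorem 1, uniqueness of $p^*$ together with the induced profile $\boldsymbol{b}^*(p^*)$ will yield the claimed Stackelberg equilibrium.

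The first concrete step is to plug (8) into (4). Writing $K = \log_2\bigl(1 + \rho h^0 d^{-\varepsilon}/N_0\bigr)$ for brevity, this reduces $U_s$ to a function of $p$ alone:
\begin{equation*}
U_s(p) = \sum_{n=1}^{N}(p-C)\Bigl(\frac{\alpha_n}{p} - \frac{D_n}{K}\Bigr).
\end{equation*}
A direct computation then gives $\partial^{2} U_s/\partial p^{2} = -(2C/p^{3})\sum_{n}\alpha_n < 0$ for all $p>0$, so $U_s$ is strictly concave on $(0,\infty)$. Solving the first-order condition $\partial U_s/\partial p = 0$ produces a single unconstrained stationary point $\tilde p = \sqrt{CK \sum_{n}\alpha_n / \sum_{n}D_n}$, which by strict concavity is the unique unconstrained maximizer. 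Since $b_n^*(p) = \alpha_n/p - D_n/K$ is strictly decreasing in $p$, the aggregate cap $\sum_{n} b_n^*(p) \leq B^{max}$ translates into a lower bound on $p$, so the leader's effective feasible set remains a closed sub-interval of $[C,p^{max}]$; strict concavity on this interval then guarantees a unique constrained maximizer $p^*$, equal to either $\tilde p$ or a binding boundary. Feeding $p^*$ back into (8) yields a unique $\boldsymbol{b}^*$, and Theorem 1 rules out profitable deviations by any VMU, so $(\boldsymbol{b}^*,p^*)$ is the unique Stackelberg equilibrium.

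The main obstacle I anticipate is the constraint bookkeeping in this last step: the clean closed-form $\tilde p$ describes the equilibrium only when both the capacity cap $\sum_n b_n^* \leq B^{max}$ and the price cap $p \leq p^{max}$ are inactive, and one must argue explicitly that otherwise the unique optimum slides to a binding boundary (justified by strict concavity, but easy to overlook). Non-emptiness of the feasible $p$-interval should also be verified from the parameter ranges stipulated in Problem 2, so that the optimization is well-posed before invoking strict concavity.
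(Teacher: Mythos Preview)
Your proposal is correct and follows essentially the same backward-induction route as the paper: substitute the followers' best responses into $U_s$, verify strict concavity in $p$ via the second derivative, and extract the unique maximizer $p^* = \sqrt{CK\sum_n\alpha_n/\sum_n D_n}$ from the first-order condition. In fact you are more careful than the paper, which simply reports the interior stationary point without discussing the capacity and price-cap constraints; your boundary analysis is a welcome addition rather than a deviation.
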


\begin{proof}
Based on \textbf{Theorem 1}, the MSP as the leader in the Stackelberg game knows that there exists a unique Nash equilibrium among VMUs under any given value of $p$. Therefore, the MSP can maximize its utility by choosing the optimal $p$. By substituting (\ref{best respone}) into (\ref{msp}), we have
\begin{equation}
    U_{s} = \sum_{n=1}^{N}(p  - C)\Bigg(\frac{\alpha_n}{p}-\frac{D_n}{\log _{2}\big(1+\frac{\rho^{} h^{0} d^{-\varepsilon}}{N_{0}}\big)}\Bigg).
\end{equation}
Then, by taking the first-order derivative and the second-order derivative of $U_{s}(p)$ with respect to $p$, respectively, we have
\begin{equation}
\begin{split}
    \frac{\partial U_{s}(p)}{\partial p} &= \sum_{n=1}^{N}\Bigg(-\frac{D_n}{\log _{2}\big(1+\frac{\rho^{} h^{0} d^{-\varepsilon}}{N_{0}}\big) }+\frac{\alpha _nC}{p^2}\Bigg),\\
    \frac{\partial^{2} U_{s}(p)}{\partial^{2} p}&=\sum_{n=1}^{N}-\frac{2C \cdot \alpha_n}{p^3} <0.
\end{split}
\end{equation}
Since the first-order derivative of $U_{s}(p)$ has a unique zero point, i.e., $p^* = \sqrt{\frac{C\log _{2}\big(1+\frac{\rho^{} h^{0} d^{-\varepsilon}}{N_{0}}\big)\sum_{n=1}^N\alpha_n}{\sum_{n=1}^ND_n}}$, and the second-order derivative of  $U_{s}(p)$ is negative, $U_s(p)$ is strictly concave, indicating that the MSP has a unique optimal solution to the formulated game\cite{zhan2020learning}. Based on the optimal strategy of the MSP, the VMUs' optimal strategies can be obtained\cite{jiang2022reliable}. Therefore, the Stackelberg equilibrium can be obtained uniquely in the formulated game.
\end{proof}

\section{Learning-based Incentive Mechanism with Incomplete Information} \label{Learning-based}
In this section, we first introduce the DRL algorithm. Then, we describe how to transform the Stackelberg game into a learning task. Specifically, we model the Stackelberg game between the MSP and VMUs as a Partially Observable Markov Decision Process (POMDP) and design a DRL-based learning algorithm to explore the optimal solution to the Stackelberg model, where the MSP is the learning agent.

\subsection{Deep Reinforcement Learning for Stackelberg Game} \label{MA-SRL}
Due to the competitive effect, each VMU only has its local information which is incomplete in the game and determines the bandwidth strategies in a fully non-cooperative
manner\cite{huang2022joint}. DRL can be utilized to learn an optimal policy from past experiences based on the current state and the given reward without knowing any prior information. Here are the details of the DRL formulation.


\emph{1) State space:}  At the current game round $k\in \mathcal{K} = \small\{0, \ldots,k,\ldots, K\small\}$, the state space is defined as a union of the current MSP's pricing strategy and VMUs' bandwidth demand strategies, which is denoted as
$ S_k\triangleq \left\{p_k,\boldsymbol{b}_{k}\right\}.$

\emph{2) Partially observable policy:} To tackle the non-stationary problem in the DRL system for facilitating VT migration, we formulate the partially observable space for VT migration. The MSP agent can only make decisions according to its local observation of the environment. We define the observation space $o_k$ of the MSP at the current game round $k$ as a union of its historical pricing strategies and VMUs' bandwidth demand strategies for past $L$ rounds, given by 
\begin{dmath}
    o_k \triangleq \left\{p_{k-L}, \boldsymbol{b}_{k-L}, p_{k-L+1}, \boldsymbol{b}_{k-L+1}, \ldots, p_{k-1}, \boldsymbol{b}_{k-1}\right\}.
\end{dmath}

Note that $p_{k-L}$ and $\boldsymbol{b}_{k-L}$ can be generated randomly during the initial stage when $k<L$. We consider historical information because it enables the MSP agent to learn how its strategy changes impact the game result of the current time slot. When receiving an observation $o_k$, the MSP agent needs to take a pricing action $p_k$ to maximize its utility. Given the lower bound 
cost $C$ and the upper bound price $p^{max}$ for the pricing action, the action space can be represented as $p_k \in[C, p^{max}]$, and the MSP’s policy can be represented as $\pi_{\boldsymbol{\theta}}\left(p_k \mid o_k\right) \rightarrow[C, p^{max}]$. Note that we use a neural network to represent the policy $\pi_{\boldsymbol{\theta}}$ and the value function $V_{\pi_{\boldsymbol{\theta}}}(\cdot)$, where $\boldsymbol{\theta}$ is the neural network parameter.

\emph{3) Reward:} After the state transition, the MSP would gain a reward based on the current state $S_k$ and the corresponding action $p_k$. The reward function of the MSP can be defined as
\begin{equation}
\label{r_MSP}
    R(S_k, p_k)=\begin{cases}1,\:U_s^k \ge U_{best}^k,
\\[2ex]0,\:U_s^k < U_{best}^k,\end{cases}
\end{equation}
where $U_s^k$ is the current utility of the MSP in (\ref{msp}) and $U_{best}^k$ is the highest utility that the MSP has obtained until round $k$.

\emph{4) Value function:} Given a policy $\pi_{\boldsymbol{\theta}}$, the value function $V_{\pi_{\boldsymbol{\theta}}}(S)$ can measure the expected return when starting in $S$ and following $\pi_{\boldsymbol{\theta}}$ thereafter \cite{xu2021multiagent}, which is defined as 
\begin{equation}
    V_{\pi_{\boldsymbol{\theta}}}(S)\triangleq \hat{\mathbb{E}}_{\pi_{\theta}}\left[\sum_{k=0}^{K} \gamma^{k} R\left(S_{k}, p_{k}\right) \mid S_{0}=S\right],
\end{equation}
where $\hat{\mathbb{E}}_{\pi_{\theta}}(\cdot)$ is the expected value of a random variable given that the MSP agent follows the policy $\pi_{\boldsymbol{\theta}}$, and $\gamma \in [0, 1]$ is the reward discounting factor to reduce the weights as the time step increases.

\emph{5) Actor-critic framework design:} We leverage the popular actor-critic framework and the Proximal Policy Optimization 
method for policy iteration \cite{zhan2020learning}. Following \cite{xu2021multiagent}, at each training iteration, we randomly sample experiences from the replay buffer to update the network parameter. Then, \textit{Generalized Advantage Estimation} \cite{schulman2015high} is used to compute variance-reduced advantage function estimator $A(S, p)$ that utilizes a learning state-value function $V_{\pi_{\boldsymbol{\theta}}}(S)$. Since the policy and the value function share the same parameter $\boldsymbol{\theta}$ of the neural network, the loss function consists of the policy surrogate $L^{CLIP}\left(\boldsymbol{\theta}\right)$ and the value function error term $L^{VF}\left(\boldsymbol{\theta}\right)$. Finally, to update the policy and the value function, we utilize stochastic gradient ascent to maximize the objective function as follows:
\begin{dmath}
\label{20}
    \boldsymbol{\theta}_{e+1}=\arg \max _{\boldsymbol{\theta_{e}}} \frac{1}{\left|I\right|} \sum_{\left|I\right|} \hat{\mathbb{E}}_{k}\Big[L_{k}^{CLIP}\left(\boldsymbol{\theta}_{e}\right)-c L_{k}^{V F}\left(\boldsymbol{\theta}_{e}\right)\Big],
\end{dmath}
\begin{equation}
\begin{split}
    L_{k}^{CLIP}(\boldsymbol{\theta}_{e})=\hat{\mathbb{E}}_{k}\bigg[\min &\Big(r_{k}(\boldsymbol{\theta}_{e}) A(S_k,p_k),\\ &f_{clip}\left(r_{k}(\boldsymbol{\theta_{e}})\right)
    A(S_k,p_k)\Big)\bigg],
\end{split}
\end{equation}
\begin{equation}
    L_ {k}^ {VF}( \boldsymbol{\theta}_{e}  )=  \Big(V_{\pi_{\boldsymbol{\theta}_{e}}}(S_ {k})-V_ {k}^ {targ}\Big)^ {2}  ,
\end{equation}
where
\begin{equation}
    r_{k}(\boldsymbol{\theta}_{e})=  \frac {\pi _ {\boldsymbol{\theta}_{e} }(p_ {k}|o_ {k})}{\pi _ {\boldsymbol{\theta}_{e}^{old} }(p_ {k}|o_ {k})}  ,
\end{equation}
\begin{equation}
\begin{split}
    A\left(S_{k}, p_{k}\right)=&-V_{\pi_{\boldsymbol{\theta}_{e}}}\left(S_{k}\right)+\sum_{l=k}^{K-1} \gamma^{l-k} R(S_l,p_l)\\
    &+\gamma^{K-k} V_{\pi_{\boldsymbol{\theta}_{e}}}\left(S_{K}\right),
\end{split}
\end{equation}
and
\begin{equation}
    f_{clip}(r_{k}(\boldsymbol{\theta}_{e}))=  \begin{cases}1-\epsilon,\:r_{k}(\boldsymbol{\theta}_{e})<1-\epsilon,\\
1+\epsilon,\:r_{k}(\boldsymbol{\theta}_{e})>1+\epsilon,\\r_{k}(\boldsymbol{\theta}_{e}),\:1-\epsilon \leq r_{k}(\boldsymbol{\theta}_{e}) \leq 1+\epsilon.\end{cases}  
\end{equation}
Here, $V_k^{targ}$ is the total discount reward from time step $k$ until the end of the episode, $\boldsymbol{\theta}_{e}$ and $\boldsymbol{\theta}_{e+1}$ are the policy parameter in episode $e$ and $e+1$, $\boldsymbol{\theta}_{e}^{old}$ represents the policy parameter for sampling in episode $e$, $c$ is a loss coefficient of the value function, $r_{k}$ is the importance ratio, and $I$ is the batch size of sampled experiences for calculating policy gradients.

\subsection{Algorithm Details}
Motivated by the above analysis, the proposed DRL algorithm details are illustrated in \textbf{Algorithm 1}. 
The time complexity of the proposed DRL algorithm is determined by the multiplication operations in a fully connected deep neural network \cite{zhan2020learning}, which can be expressed as $\mathcal{O}\left(\sum_{f=1}^{F} \epsilon_{f} \epsilon_{f-1}\right)$, where $\epsilon_f$ is the number of neural units in layer $f$ and $F$ is the number of hidden layers.
\begin{algorithm}[t]  
\small
\label{algorithm2}
\caption{Proposed DRL-based Solution for VT Migration}\label{algorithm}
Initialize max round in an episode $K$, number of episodes $E$, batch size $I$ and network parameter $\boldsymbol{\theta}$\;
\For{Episode $e \in 1,\ldots,E $}
{   
    Reset environment state $S_0$ and replay buffer $\mathcal{B F}$\;
    \For{Round $k \in 0,\ldots,K$}
    {   
        MSP observes a state $S_k$ and updates its observation $o_{k-1}$ into $o_k$\;
        Input $o_k$ into MSP's actor policy $\pi_{\boldsymbol{\theta}_e}$ and determine the current price strategy $p_k$\;
        VMUs determine bandwidth demands through (\ref{best respone})\;
        Update $S_k$ into $S_{k+1}$ and calculate reward $R_k$ for the MSP through (\ref{r_MSP}). Then, update $U_{best}^k$ when a higher reward is obtained\;
        Store transition $(o_k,p_k,R_k,o_{k+1})$ into
 $\mathcal{B F}$\;
        \If{ $k\%\left|I\right|==0$}
        {
            \For{$m \in 1,\ldots,M $}
            {
                Sample a random mini-batch of data with a size $\left|I\right|$ from $\mathcal{B F}$ to update the actor and critic through (\ref{20})\;
            }
        }
    }
}
\end{algorithm}

\section{Numerical Results}
In this section, we evaluate the performance of the VT migration system for vehicular metaverses and the proposed DRL-based incentive mechanism through simulation experiments. We first describe the experimental settings, followed by the experimental results and analysis.
\subsection{Experiment Settings}
We consider that there is one MSP and the number of VMUs $N\in \left [ 1,6 \right ] $. Each VT has the data size $D_n\in \left [ 100,300 \right ](\rm{MB}) $ and the immersion coefficient $\alpha_n\in \left [ 5,20 \right ]$. The MSP's maximum bandwidth, transmission cost, and maximum selling price are set to 
$50\rm{MHz}$, $5$, and $50$, respectively. As for the RSU parameters, the transmitter power of the source RSU $\rho $ is $40\rm{dBm}$, the unit channel power gain $h_0$ is $-20\rm{dB}$, the distance between the RSUs $d$ is $500\rm{m}$, the path-loss coefficient $\epsilon$ is $2$, and the average noise power $N_0$ is $-150\rm{dBm}$. The parameters of the DRL are selected through fine-tuning. Specially, we set $L=4$, $D=20$, $E=500$, $K=100$, $M=10$, and $lr=0.00001$ during experiments. Both the two hidden layers of the neural network have $64$ nodes.

\begin{figure}[t]
\vspace{-0.7cm}
\centering 
\subfigure[The return of each episode.]{
\label{convergence.1}
\includegraphics[width=0.48\linewidth]{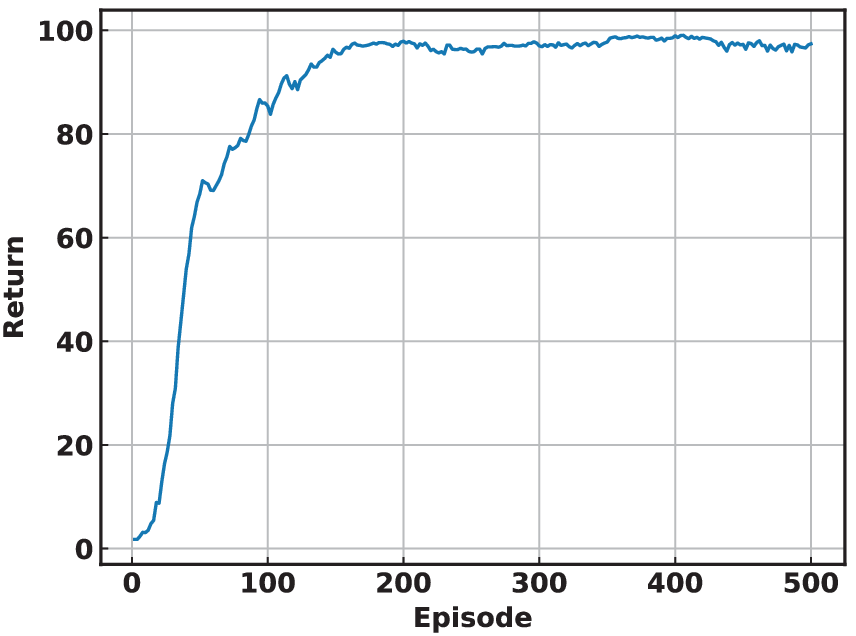}}
\hfill
\subfigure[The utility of the MSP.]{
\label{convergence.2}
\includegraphics[width=0.48\linewidth]{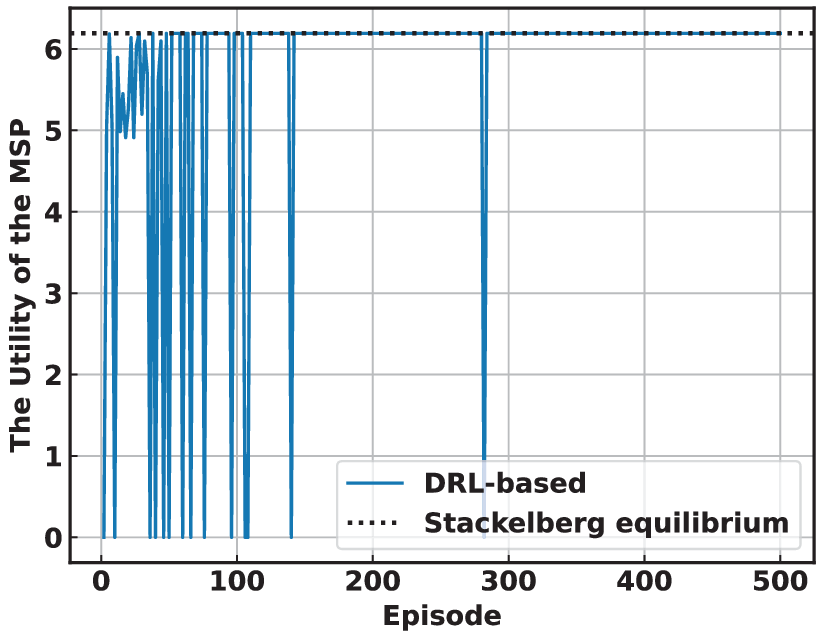}}
\caption{Convergence of DRL-based incentive mechanism.}
\label{convergence}
\end{figure}

\subsection{Experiment Results}
Figure \ref{convergence} shows the convergence of the proposed DRL-based incentive mechanism when there are two VMUs. We set $\alpha_1=\alpha_2=5$, $D_1=200\rm{MB}$, $D_2=100\rm{MB}$, and cost $C=5$. As shown in Fig.~\ref{convergence.1}, the game return of each episode converges to the maximum round, which indicates that the MSP can always choose the optimal strategy in each round. In Fig.~\ref{convergence.2}, the utility of the MSP converges to the Stackelberg equilibrium. Therefore, the DRL-based incentive mechanism under incomplete information is as strong as the Stackelberg game with complete information.

\begin{figure*}[t]
\vspace{-1cm}
	\centering
	\subfigure[ The utility and price strategy of the MSP vs. Transmission cost.]{\label{Cost_MSP}\includegraphics[width=0.24\textwidth]{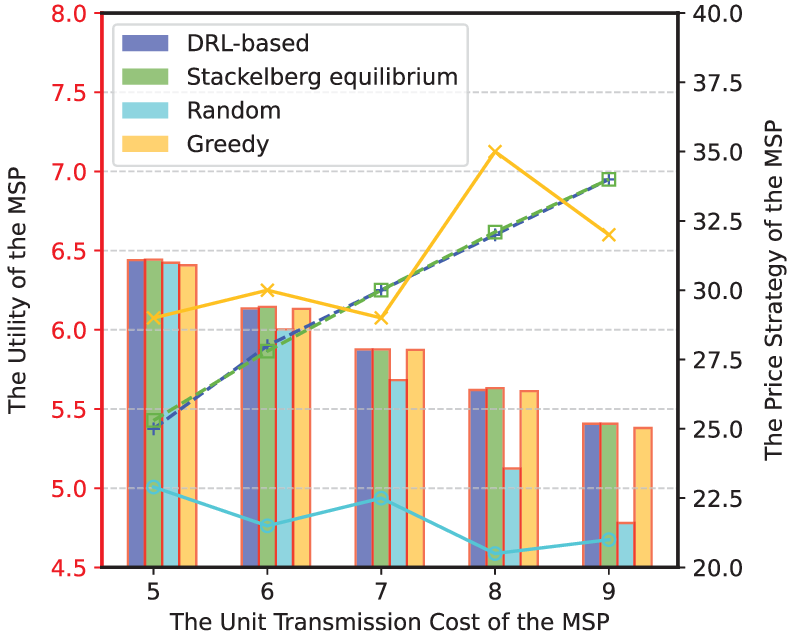}}
	\subfigure[ Total utility and bandwidth strategy of VMUs vs. Transmission cost.]
{\label{Cost_MU}\includegraphics[width=0.24\textwidth]{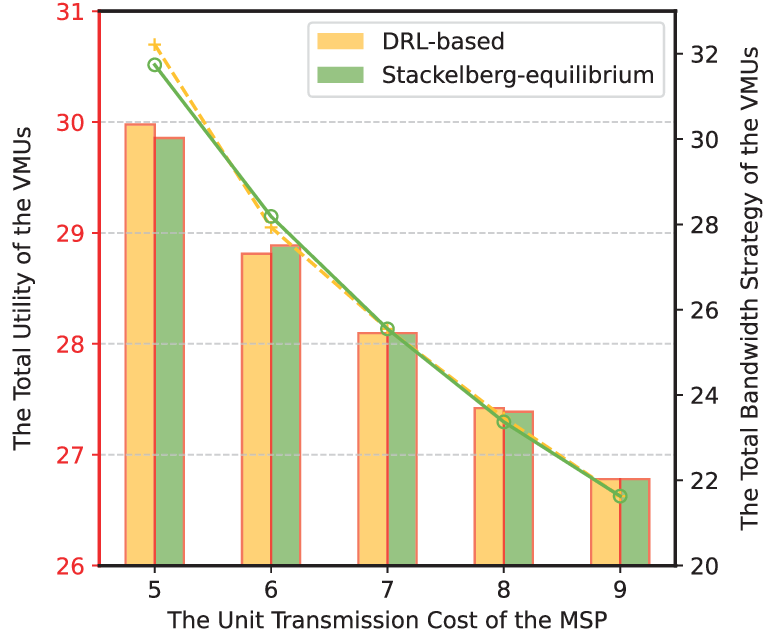}}
	\subfigure[ The utility and price strategy of the MSP vs. Number of VMUs.]
{\label{Num_MSP}\includegraphics[width=0.24\textwidth]{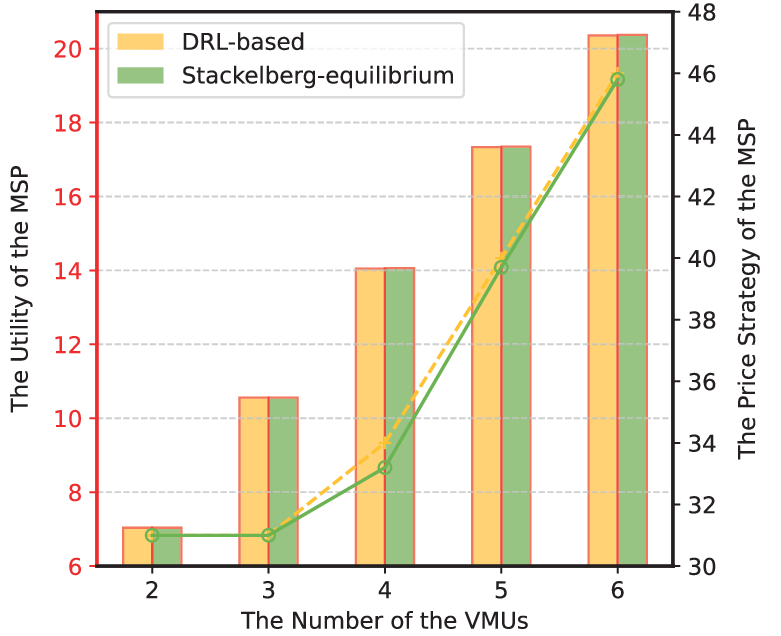}}
	\subfigure[ Average utility and bandwidth strategy of VMUs vs. Number of VMUs.]
{\label{Num_MU}\includegraphics[width=0.24\textwidth]{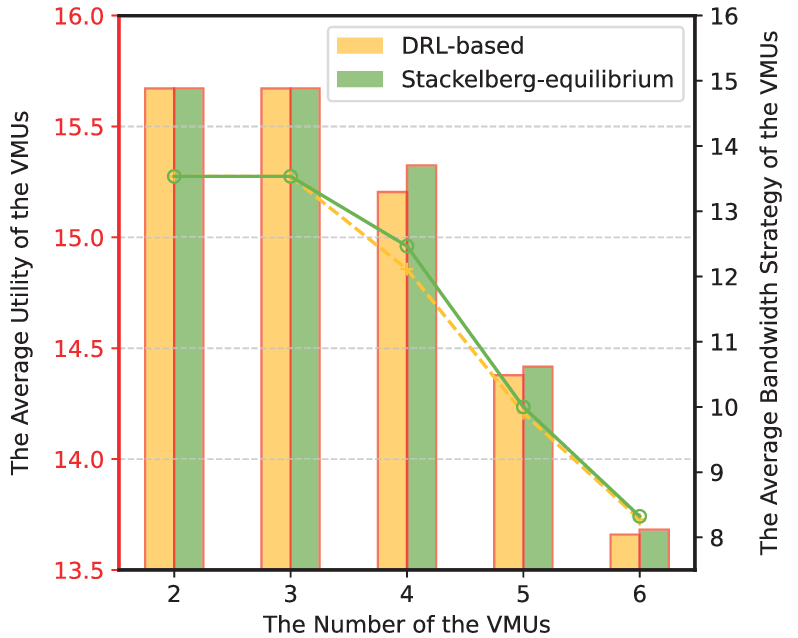}}
	\caption{The performance of the proposed DRL-based incentive mechanism.}\label{performance}
\end{figure*}
Figure \ref{performance} shows the performance of the proposed DRL-based incentive mechanism. In Fig.~\ref{Cost_MSP} and Fig.~\ref{Cost_MU}, we study the influence of the unit transmission cost. Specifically, we study the unit transmission cost by changing it from $5$ to $9$ and consider that there are two VMUs whose VT data sizes are $200\rm{MB}$ and $100\rm{MB}$, and whose immersion coefficients are both $5$. From Fig.~\ref{Cost_MSP} and Fig.~\ref{Cost_MU}, we can see that both the utilities and strategies of the MSP and VMUs in the optimal solutions of the proposed scheme are approaching the Stackelberg equilibrium, which demonstrates that the proposed scheme can find the optimal solution under incomplete information. As the unit transmission cost increases, the pricing of the MSP also increases in Fig.~\ref{Cost_MSP}. For example, when the unit transmission cost is $5$, the MSP sets the price at $25$ to incentive VMUs to perform VT migration. However, when the unit transmission cost is $9$, a higher price of $34$ will be set. In Fig.~\ref{Cost_MU}, we can observe that the total bandwidth strategy of VMUs decreases when the unit transmission cost increases. For example, when the unit transmission cost is $6$, VMUs purchase bandwidth resources of $27.9$. While VMUs only purchase bandwidth resources of $23.4$ when the unit transmission cost is $8$. Both the utilities of the MSP and VMUs significantly decrease due to the high cost of transmission in Fig.~\ref{Cost_MSP} and Fig.~\ref{Cost_MU}. The reason is that when the transmission cost is high, the MSP would increase the bandwidth price due to the cost consideration, leading to a decrease in bandwidth purchased by VMUs because of the high price.
Furthermore, we compare the proposed DRL-based scheme with random and greedy schemes. In the random scheme, the MSP determines the price randomly in each game round, while in the greedy scheme, the MSP determines the best price by selecting from past game rounds. In Fig.~\ref{Cost_MSP}, we can find that our proposed scheme outperforms the baseline schemes.

Next, we study the impacts of the number of VMUs in Fig.~\ref{Num_MSP} and Fig.~\ref{Num_MU}. We set the data size of the VT as $100\rm{MB}$, and the immersion coefficient $\alpha_n$ is $5$. As shown in Fig.~\ref{Num_MSP}, the utility of the MSP increases when the number of VMUs increases. For example, the utility of the MSP is $7.03$ when there are only two VMUs. When the number of VMUs increases to $6$, the MSP can obtain a higher utility of $20.35$. Note that the price of the MSP remains unchanged initially and increases later. The reason is that when there are fewer VMUs, the bandwidth resources of the MSP are sufficient, but when the number of VMUs is too large, the bandwidth of the MSP becomes insufficient. Therefore, the MSP needs to increase the price of bandwidth to limit the purchase of excessive bandwidth by VMUs. As shown in Fig.~\ref{Num_MU}, the average bandwidth purchased by VMUs remains unchanged at first and decreases later. Due to the competition among VMUs, the average utility of VMUs decreased by $12.8\%$ as the number of VMUs increases from $2$ to $6$.

\section{Conclusion}
In this paper, we proposed a learning-based incentive mechanism for task freshness-aware VT migration in vehicular metaverses. To quantify the task freshness of the VT migration, we proposed a new metric called AoTM according to the concept of the AoI. Then, we formulated the resource trading problem between the MSP and VMUs as a Stackelberg game. Furthermore, we utilized DRL to solve the game under incomplete information. Finally, numerical results demonstrate the effectiveness of the proposed mechanism. In the future, we will adopt more effective immersive metrics in conjunction with AoTM to better evaluate the immersion of VMUs and may develop a prototype system to evaluate our framework. Besides, we aim to extend our model to scenarios with multiple MSPs and VMUs.

\bibliographystyle{IEEEtran}
\bibliography{ref}
\end{document}